\documentclass{article}
\usepackage[colorlinks=true, linkcolor=blue, citecolor=blue,urlcolor=black]{hyperref}
\usepackage[utf8]{inputenc} 
\usepackage[T1]{fontenc}    
\usepackage{hyperref}       
\usepackage{url}            
\usepackage{booktabs}       
\usepackage{amsfonts}       
\usepackage{nicefrac}       
\usepackage{microtype}      
\usepackage{xcolor}         
\usepackage{graphicx}
\usepackage{forest}
\usepackage{algorithm}
\usepackage[noend]{algpseudocode}
\usepackage[utf8]{inputenc}
\usepackage{comment}
\usepackage{amsmath}
\usepackage{amssymb}
\usepackage{amsthm}
\usepackage{geometry}
\setlength{\textwidth}{426pt}
\usepackage{graphicx}
\graphicspath{ {./images/} }
\usepackage{amsmath,amssymb}
\usepackage{fullpage}
\usepackage{color,soul}   
\usepackage{hyperref}
\usepackage{todonotes}
\hypersetup{
	colorlinks=true, 
	linktoc=all,     
	linkcolor=blue,  
}

\newtheorem{theorem}{Theorem}

\newtheorem{remark}{Remark}
\newtheorem{claim}{Claim}
\newtheorem{lemma}{Lemma}
\newtheorem{corollary}{Corollary}

\usepackage{mathtools}

\graphicspath{{./},{./figures/}}
\def\mathclap#1{\text{\hbox to 0pt{\hss$\mathsurround=0pt#1$\hss}}}

\title{Bode Integral Limitation For Irrational Systems}

\author{William Chang,\thanks{W.\ Chang is with the Department of Mathematics, University of Southern California, Los Angeles, CA; chan087@usc.edu.}\and Fariba Ariaei \thanks{F. Ariaei is with the Department of Mechanical and Aerospace Engineering, University of California in Irvine, Irvine, CA; fariaei@uci.edu.} \and Edmond Jonckheere \thanks{E.\ Jonckheere is with the Department of Electrical Engineering, University of Southern California, Los Angeles, CA; jonckhee@usc.edu.}}
\date{}
\begin{document}
\maketitle

\begin{abstract}
Bode integrals of sensitivity and sensitivity-like functions along with complementary sensitivity and complementary sensitivity-like functions are conventionally used for describing performance limitations of a feedback control system. In this paper, we investigate the Bode integral and evaluate what happens when a fractional order Proportional-Integral-Derivative (PID) controller is used in a feedback control system. We extend our analysis to when fractal PID controllers are applied to irrational systems. We split this into two cases: when the sequence of infinitely many right half plane open-loop poles doesn't have any limit points and when it does have a limit point. In both cases, we prove that the structure of the Bode Integral is similar to the classical version under certain conditions of convergence. We also provide a sufficient condition for the controller to lower the Bode sensitivity integral. 
\end{abstract}

\section{Introduction}

Control systems design is contingent on many performance considerations and physical limitations, and is seen invariably a tradeoff between the quest for high performance goals and the need for meeting hard design constraints. However, not all design goals are achievable and the fundamental question is what system characteristics may impose inherent constraints on design and implementation.

Systematic investigation and understanding of fundamental control limitations date back to the classical work of Bode in the 1940s on the logarithmic sensitivity integral, known as the \textit{Bode sensitivity integral} relation (\cite{Bode1945}). Bode’s work has had a lasting impact on the theory and practice of control, and has inspired continued research efforts ever since, resulting in a variety of extensions and new results that seek to quantify design constraints and performance limitations by logarithmic integrals of Bode and Poisson type (see, e.g., \cite{freudenberg1985right}, \cite{BodeSensitivity2022}). In feedback control, Bode's Sensitivity integral quantifies the restriction on sensitivity function $S(s)$, defined as $S(s):=\frac{1}{1+L(s)}$, the transfer function between the reference input to the tracking error or an output disturbance signal to the output, where $L(s)$ is the loop transfer function (see Fig. \ref{fig:unityFeedback}). The theorem developed by Hendrik Wade Bode states
\begin{figure}
    \centering
    \includegraphics[width = .3\textwidth]{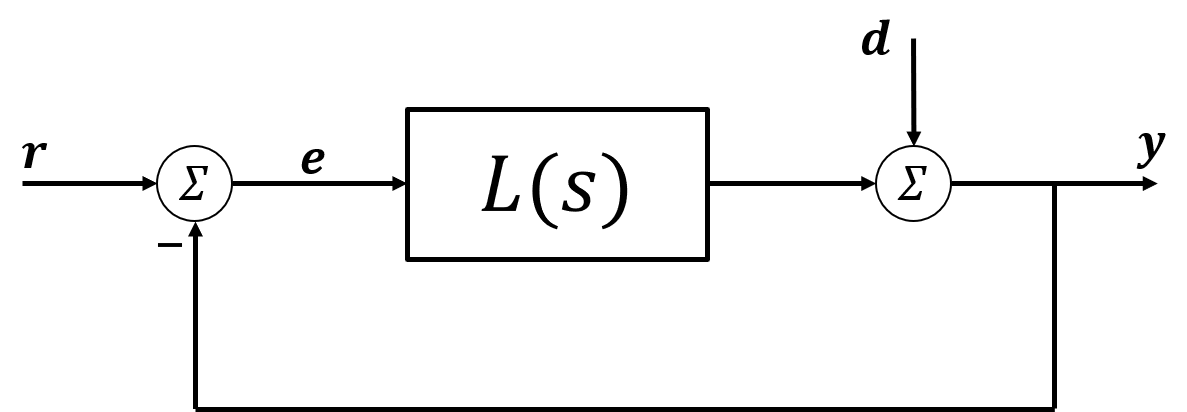}
    \caption{Unity feedback system.}
    \label{fig:unityFeedback}
\end{figure}
\begin{align} \label{eq:Bode}
\int_{-\infty}^\infty \ln|S(iw)|^2 dw = 4\pi\sum_k \mathfrak{R}(p_k) - 2\pi\lim_{s \to \infty} s L(s),
\end{align}
where $p_k$'s are the poles of $L(s)$ in the right half plane.
There is extensive literature on sensitivity of control systems and the fundamental and inevitable design limitations for linear time-invariant (LTI) systems. In the search for the best achievable performance, the performance limitation of feedback systems has also been extensively investigated in the study of optimal control problems, leading to the discovery of fundamental performance limits defined under various criteria. Most notably, this area of study has witnessed in recent years significant theoretical developments in multivariable, sampled-data, time-varying, nonlinear, networked control, communication, and applications to technological fields including, e.g., Active Queue Management (AQM) and TCP congestion control (\cite{shah2004performance}), acoustic control (\cite{pang2005suppressing}), motion control (\cite{subramanian2018discrete,sariyildiz2021guide,aangenent2005nonlinear}), and biological systems (\cite{BodeInBiology}). 
There has also been extensive studies on information theoretic analysis of dynamical systems. \cite{jonckheere1992chaotic} show that Kolmogorov-Sinai entropy cannot be reduced by linear, stationary feedback if open loop system is stable, and in general, the Shannon entropy rate can't be decreased because of the Bode limitation. \cite{zhang2002information,zhang2003bode} use Bode integral to connect entropy rate of system output with strictly unstable poles of open-loop transfer function, and to formulate system performance limits in the framework of information theory. Later in \cite{li2012bode, wan2018sensitivityCDC, wan2019sensitivity}, and \cite{roy2011studies}, mutual information is used to analyze the Bode sensitivity integral and performance limitations in communication-control systems, when stochastic disturbances and noise are present.
The interest in Bode’s work has also recently led to several extensions
to the aforementioned Bode’s theorem. \cite{chen1997sensitivity} provides derivations for Bode and Poisson-type integral relations for both continuous-time and discrete-time systems based on properties of Laplace and Z transformations, respectively. Whereas, \cite{wu1992simplified} and \cite{wan2019simplified,wan2022simplified}, used a simplified approach to analyze Bode and Bode-type filtering sensitivity integrals for both continuous-time and discrete-time systems without using Cauchy integral theorem or Poisson integral formula. In a similar approach, \cite{emami2019bodeRevisited} show that the sensitivity integral constraint is crucially related to the difference in speed (bandwidth) of the closed-loop and the open-loop system. The analysis and results of \cite{emami2019bodeRevisited}, \cite{freudenberg1985right, Freudenberg87, Freudenberg1988FrequencyDP}, and  \cite{Kwakernaak1972} are restricted to systems with rational loop transfer functions and specific constraints on the characteristics of systems at high frequencies. 

\textbf{Our Contributions:} We develop a specialized Bode integral limitation for irrational systems with fractal PID controllers. Irrational systems have fractional-order models and are more adequate than the previously used integer-order models, especially for the description of long memory and hereditary effects in various physical media, as well as for modeling dynamical processes in fractal media (as defined by \cite{Mandelbrot1982}). However, fractional-order dynamic systems have been studied marginally with regard to fundamental limitations. In this paper, we will show the advantages of using fractal PID controllers against their integer counterparts, namely PID controllers. We use complex variable theory to prove the Bode sensitivity integral for fractional systems and show that the Bode sensitivity integral can be reduced. The results are significant as they indicate with a suitable choice of fractional orders, the fractal PID controllers provide a better performance regardless of robustness conditions. Possible tradeoffs will be examined in future work.

The remainder of this paper is organized as follows. In section \ref{sec:finite} we use the calculus of residues and contour analysis to evaluate the bode integral. In this section, we assume the system's model is rational. In section \ref{sec:infinite}, we extend this analysis to general sensitivity functions. We apply Weierstrass factorization to $S$ to write it as a quotient of two polynomial signals where the degree of the polynomials is allowed to be infinite. In particular, we focus our analysis on when there are infinitely many open right half plane poles. We split  this into two cases: when the sequence of poles don't have any limit points and when they do have limit point. In both cases, we find that the form of the Bode integral is similar to the classical version under certain conditions of convergence. In our analysis, we assume the contour integral around the large semicircle goes to $0$ as the radius of the semi-circle gets larger; however other assumptions can easily make our results more generalizable. Finally, we describe future work in section \ref{sec:conclusion}.

\section{Finite open right half plane poles}\label{sec:finite}

Let $I(S)$ be the Bode integral of the sensitivity function $S$ (defined in \eqref{eq:Bode}) of a control system in Fig. \ref{fig:controlSys}. For our analysis, we assume that $S(s)$ is a general function $S: \mathbb{C} \to \mathbb{C}$. The poles of $S(s)$ do not necessarily need to be symmetric about $\mathbb{R}$.  

\begin{figure}
    \centering
    \includegraphics[width = .3\textwidth]{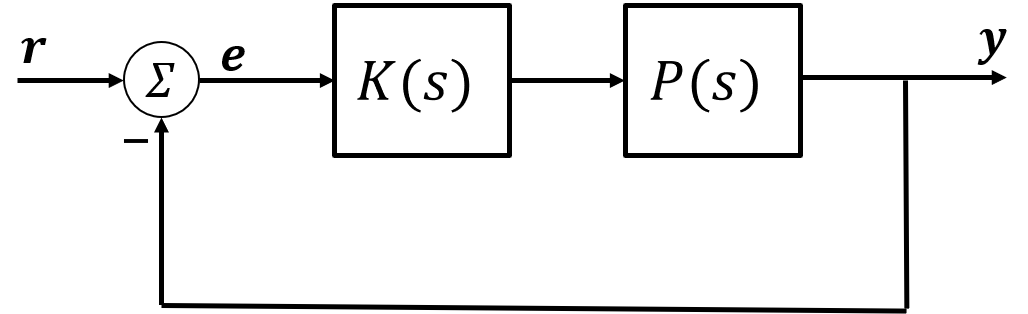} \caption{A simple unity feedback control system.}
    \label{fig:controlSys}
\end{figure}

The following manipulations are classical:
\begin{equation} \label{eq:eq1}
\begin{split}
    I(S) &=\int_{-\infty}^\infty \log|S(iw)|^2 dw\\
    &= \int_{-\infty}^\infty \log(S(iw)S(-iw))dw\\
    &= \int_{-\infty}^\infty (\log(S(iw) + \log(S(-iw)))dw\\
    &= 2\int_{-\infty}^\infty \log(S(iw))dw 
\end{split}
\end{equation}

Let $S(s) = \frac{1}{1 + P(s)K(s)}$, where $P(s)$ and $K(s)$ are the plant and controller transfer functions, respectively, so that 
\begin{equation} \label{eq:I(s)}
\begin{split}
    I(S) &= 2\int_{-\infty}^\infty \log\left(\frac{1}{1+P(iw)K(iw)}\right) dw\\
    &= -2\int_{-\infty}^\infty \log\left(1+P(iw)K(iw)\right) dw\\
    &=  2\int_{-i\infty}^{i\infty} i\log\left(1+P(s)K(s)\right) ds
\end{split}
\end{equation}

Suppose that $P(s) = \frac{N(s)}{D(s)}$ is rational so that $N(s)$ and $D(s)$ are polynomials and $n := \deg(N)$ and $m := \deg(D)$. Furthermore, consider the following \textit{fractal} (or \textit{fractional order}) PID controller, 
\begin{equation}
    K(s) = k_1s^\alpha + \frac{k_{-1}}{s^\beta} + k_0
\end{equation} 
where $k_0$ is the proportional gain, $k_{-1}$ is the integral gain, $k_1$ is the derivative gain. $\alpha$ and $\beta$ represent the derivative and integral orders, respectively, satisfying $0<\alpha<2$ and $0<\beta<2$. We will assume that $m > \alpha + n + 1$ to ensure convergence. Then, 
$1 +P(s)K(s) = 1+ \frac{N(s)(k_1 s^{\alpha+\beta} + k_{-1} + k_0s^\beta)}{D(s)s^\beta}$ so that
\begin{align}
    \log(1 + P(s)K(s)) = \log\left(1+ \frac{N(s)(k_1 s^{\alpha+\beta} + k_{-1} + k_0s^\beta)}{D(s)s^\beta} \right)
\end{align}

\begin{figure}
    \centering
    \includegraphics[width = .3\textwidth]{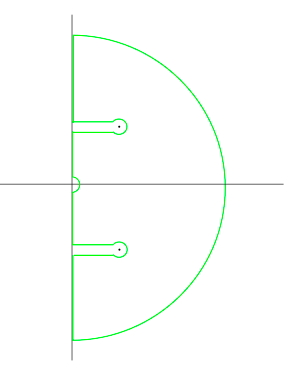}
    \caption{Path for $\gamma$ with branch cuts around poles $p_j$ of $D(s)$.}
    \label{fig:gamma}
\end{figure}

We now prove the following theorem, when $S(s)$ has finitely many poles. 

\begin{theorem} \label{thm:finite}
Suppose $P(s) = \frac{N(s)}{D(s)}$ is rational so that $N(s)$ and $D(s)$ are polynomials, and $n := \deg(N)$ and $m := \deg(D)$, satisfy $m > \alpha + n + 1$. Then
$I(s)$ as defined in \eqref{eq:I(s)} is given by
\begin{equation}
    I(S) = 4 \pi \sum_{p_j}d_j\mathfrak{R}(p_j),
\end{equation}
where $d_j$ is the order of the open loop poles $p_j$ in the right half plane, and the summation is across poles $p_j$. 
\end{theorem}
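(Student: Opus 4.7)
The plan is to evaluate $I(S) = 2i\int_{-i\infty}^{i\infty} F(s)\,ds$, where $F(s):=\log(1+P(s)K(s))$, by converting it into a residue computation in the right half plane. Rather than chase the branch cuts of $F$ around each RHP pole as Figure~\ref{fig:gamma} suggests, I would first integrate by parts on the finite segment $[-iR,iR]$, writing $\int_{-iR}^{iR} F\,ds = [sF]_{-iR}^{iR} - \int_{-iR}^{iR} sF'\,ds$, and then let $R\to\infty$. The growth hypothesis $m>n+\alpha+1$ yields $F(s) = O(|s|^{n+\alpha-m})$ so that $sF(s)\to 0$ on the imaginary axis, killing the boundary term and leaving $\int_{-i\infty}^{i\infty}F\,ds = -\int_{-i\infty}^{i\infty}sF'\,ds$. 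The payoff of this reformulation is that $sF'(s) = sL'(s)/(1+L(s))$ (with $L=PK$) is meromorphic in the RHP, so ordinary residue calculus applies and the messier branch analysis of $F$ itself is sidestepped.

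I would then close the imaginary axis contour with a large RHP semicircle $C_R$ of radius $R$, indented by a small semicircle of radius $r$ around $s=0$ to avoid the branch point contributed by $s^\alpha$ and $s^{-\beta}$ in $K$; choosing the principal branch with cut on the negative real axis keeps the cut outside the resulting contour $\Gamma_R$. Both the large arc integral (of order $R^{n+\alpha-m+1}$, vanishing by $m>n+\alpha+1$) and the small arc integral (controlled by the bounded expansion $sF'(s)\sim -\beta$ near the origin) drop out in the limit. Inside $\Gamma_R$, the only singularities of $sF'$ are the open-loop RHP poles $p_j$ of $P$, since closed-loop stability places the zeros of $1+L$ in the open left half plane. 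Near such a pole of order $d_j$ one has $1+L\sim L\sim c(s-p_j)^{-d_j}$, hence $L'/(1+L)\sim -d_j/(s-p_j)$, which after writing $s=(s-p_j)+p_j$ gives $\mathrm{Res}_{s=p_j}[sF'(s)] = -d_j p_j$. The residue theorem produces $\int_{-i\infty}^{i\infty} sF'\,ds = 2\pi i\sum_j d_j p_j$, and backing out, $I(S) = 2i\cdot(-2\pi i\sum_j d_j p_j) = 4\pi\sum_j d_j p_j$; conjugate symmetry of the loop transfer function then identifies this with the claimed $4\pi\sum_j d_j \mathfrak{R}(p_j)$.

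The main technical obstacle is the delicate behavior at the branch point $s=0$: one must check both that the principal-branch cut of $s^\alpha$ truly stays outside $\Gamma_R$ and that the small indenting semicircle contributes nothing as $r\to 0$, which in turn rests on the cancellation $sF'(s)\sim -\beta$ being bounded rather than logarithmically divergent. A subtler accompanying issue is uniformity of the decay estimate $|PK| = O(|s|^{n+\alpha-m})$ in $\arg s$ across the large semicircle, which is precisely why the strict inequality $m>\alpha+n+1$ (rather than $m\geq \alpha+n+1$) is required.
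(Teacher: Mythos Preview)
Your argument is correct and takes a genuinely different route from the paper's. The paper integrates $\log S$ directly over the keyhole contour of Figure~\ref{fig:gamma}: horizontal branch cuts are inserted from the imaginary axis to each right half plane open-loop pole $p_j$, and the jump of $\log(s-p_j)$ by $2\pi i$ across that cut produces, after integrating along the corridor of length $\mathfrak{R}(p_j)$, the contribution $2\pi i\,d_j\,\mathfrak{R}(p_j)$. Four lemmas then dispose of the large semicircle, the small circles around each $p_j$, and the indentation at the origin. Your integration by parts replaces $\log(1+L)$ by the meromorphic integrand $sL'/(1+L)$, so the branch-cut bookkeeping disappears entirely and the ordinary residue theorem delivers $\mathrm{Res}_{p_j}=-d_jp_j$ in one line. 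The trade-off is at the very end: the paper's horizontal-corridor geometry extracts $\mathfrak{R}(p_j)$ directly, whereas you obtain $4\pi\sum d_jp_j$ and must invoke conjugate symmetry of $L$ to reduce to real parts. Since the paper already uses $\overline{S(iw)}=S(-iw)$ in its derivation of \eqref{eq:eq1}, this is not an extra hypothesis. One small point to tighten: your integration by parts on $[-iR,iR]$ passes through the logarithmic singularity of $F$ at $s=0$ (since $L(s)\sim cs^{-\beta}$ there), so strictly speaking you should perform the indentation \emph{before} integrating by parts and check that the resulting boundary terms $\pm ir\,F(\pm ir)$ vanish with $r$; they do, because $sF(s)\sim -\beta\, s\log s\to 0$.
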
 

\begin{proof}
We consider the path $\gamma$ given in Fig. \ref{fig:gamma} travelling clockwise. Let $\gamma_R$ be the path along the large semi-circle of radius $R$ going clockwise. There are branch cuts at every poles of $D(s)$ in the right half plane. Denote $\gamma_{p_j, \rightarrow}$ to be the portion of the branch-cut moving towards the poles, let $\gamma_{p_j, \leftarrow}$ to be the portion of the branch-cut moving away from the poles, and let $\gamma_{p_j, \epsilon}$ be the portion of the path that goes along a small circle of radius $\epsilon$ centered at $p_j$ going counterclockwise. Finally, let $\gamma_{0, \epsilon}$ be the semicircle of this path of radius epsilon centered at the origin. Since $\gamma$ does not contain poles of $\log(S(s))$, by the calculus of residues, it follows from the Cauchy integral formula that $\int_\gamma\log(S(s))ds = 0$.  We now break the path $\gamma$ as follows:
\begin{align}\label{eq:1}
  0 =  \frac{I(s)}{2i}+ \int_{\gamma_R} \log S(s) ds + \sum_{p_j\neq 0}\int_{\gamma_{p_j, \rightarrow}}\log S(s) ds+
  \sum_{p_j \neq 0}\int_{\gamma_{p_j, \leftarrow}}\log S(s) ds  +\sum_{p_j \ne 0}\int_{\gamma_{p_j,\epsilon}}\log S(s) ds+
  \int_{\gamma_{0, \epsilon}}\log S(s)ds.
\end{align}

The following four lemmas bounding each of the terms in equation \eqref{eq:1} are used to prove the result. Their proofs are deferred to the Appendix.

\begin{lemma}\label{lemma:R}
Under the conditions provided in Theorem \ref{thm:finite} the integral $\int_{\gamma_R} \log S(s) ds$ in equation \eqref{eq:1} satisfies:
\begin{equation}
    \lim_{R \rightarrow \infty}\int_{\gamma_R} \log S(s) ds = 0.
\end{equation}
\end{lemma}

\begin{proof}
See the Appendix. 
\end{proof}

\begin{lemma}\label{lemma:epsilon}
The integral in $\int_{\gamma_{p_j,\epsilon}} \log S(s) ds$ in equation \eqref{eq:1} satisfies:
\begin{equation}
\lim_{\epsilon \rightarrow 0} \int_{\gamma_{p_j,\epsilon}} \log S(s) ds = 0 
\end{equation}
\end{lemma}

\begin{proof}
See the Appendix.
\end{proof}

\begin{lemma}\label{lemma:0}
The integral $\int_{\gamma_{0, \epsilon}}\log S(s) ds$ goes to $0$ as $\epsilon \rightarrow 0$.
\end{lemma}

\begin{proof}
See the Appendix.
\end{proof}

\begin{lemma}\label{lemma:branch}
The $\int_{\gamma_{p_j, \leftarrow}} \log S(s) ds$ and $\int_{\gamma_{p_j, \rightarrow}} \log S(s) ds$ in equation \eqref{eq:1} satisfies:
\begin{align}
  \int_{\gamma_{p_j, \leftarrow}} \log S(s) ds = \int_{\gamma_{p_j, \rightarrow}} \log S(s) ds= \log S(s) ds = 2i \pi \sum_{p_j}d_j\mathfrak{R}(p_j)
\end{align}
\end{lemma}

\begin{proof}
See the Appendix.
\end{proof}

Applying the preceding lemmas to equation \eqref{eq:1} yields:
\begin{align}
    I(S) = -2i\sum_{p_j}\int_{\gamma_{p_j, \rightarrow}} \log S(s) ds- 2i\sum_{p_j}\int_{\gamma_{p_j, \leftarrow}} \log S(s) ds = 4 \pi \sum_{p_j}d_j\mathfrak{R}(p_j)
    \label{eq:final}
\end{align}

Thus proving Theorem \ref{thm:finite}.
\end{proof}

\section{Infinite open right half plane poles}\label{sec:infinite}

In this section, we suppose that $P(s)$ is no longer a rational signal. We can invoke the Weierstrass factorization from \cite{knopp2013theory} to the numerator of sensitivity function to obtain
\begin{equation}\label{eq:weierstrass}
    S(s) = g(s)\prod_{j=1}^\infty (s - p_j)^{d_j},
\end{equation}
where $d_j$ is the order of the open right half plane poles at $s = p_j$ and $g(s)$ is holomorphic on $\mathbb{C}$ with no zeros in the right half plane. Note that since our system is closed loop stable, it has no poles in the right half plane, and thus we do not need to consider them in our contour analysis. Here, $d_j$ doesn't necessarily need to be an integer for our analysis. In the next two sections, we shall show that $I(s)$ as defined in equation \eqref{eq:I(s)} evaluates to the same thing whether the open right half plane poles have limit points in $\mathbb{C}$ or not. 

\subsection{No limit points}

In the case when the open right half plane poles don't form any limit points we have the following result. 

\begin{theorem} \label{thm:nolimit}
 For a sensitivity function $S(s)$, let $I(s)$ be as defined in equation \eqref{eq:I(s)}. Let $p_j$ be the open right half plane poles with order $d_j$ and suppose there are infinitely many of them. Furthermore, suppose that the poles don't form a limit point. Then the bode integral evaluates to:
 
\begin{equation}\label{eq:no_limit}
I(s) =  4 \pi\sum_{p_j} d_j\mathfrak{R}(p_j).
\end{equation}

 where summation across poles $p_j$ are in the right half plane. 
\end{theorem}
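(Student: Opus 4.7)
The plan is to obtain the infinite-pole case by exhaustion, reducing it to a sequence of finite-contour analyses of the form already handled by the proof of Theorem \ref{thm:finite}. Since $\{p_j\}$ has no limit point in $\mathbb{C}$, any disk of finite radius contains only finitely many poles, so we may enumerate them with $|p_1|\le|p_2|\le\cdots\to\infty$. For each integer $N$, choose a radius $R_N$ with $|p_j|<R_N$ for $j\le N$, $|p_j|>R_N$ for $j>N$, and no pole lying on the circle $|s|=R_N$; such radii exist by the no-limit-point hypothesis.

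I would then mimic the contour $\gamma$ of Figure \ref{fig:gamma} to build $\gamma^{(N)}$: large semicircle at radius $R_N$, branch cuts and small circles of radius $\epsilon$ around each of the first $N$ enclosed poles, and the small origin indent. Because the Weierstrass factorization \eqref{eq:weierstrass} identifies the only right-half-plane singularities of $\log S(s)$ as the $p_j$, and the remaining (infinitely many) poles all lie outside $\gamma^{(N)}$, the integrand is holomorphic on and inside $\gamma^{(N)}$, so Cauchy's theorem yields the exact analogue of equation \eqref{eq:1} with the pole-sums truncated to the first $N$ indices and the imaginary-axis piece taken over $[-iR_N,iR_N]$. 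Letting $\epsilon\to 0$ and invoking Lemmas \ref{lemma:epsilon}, \ref{lemma:0}, and \ref{lemma:branch} (each of which is local around a single pole or the origin, so applies unchanged) kills the small-circle terms and leaves a total branch-cut contribution of $2i\pi\sum_{j=1}^{N}d_j\,\mathfrak{R}(p_j)$.

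Next, I would send $N\to\infty$. The truncated imaginary-axis integral converges to $I(S)/(2i)$ by the definition of $I(S)$ in \eqref{eq:I(s)}. The large-semicircle integral $\int_{\gamma_{R_N}}\log S(s)\,ds$ must vanish in the limit; this is the standing hypothesis declared in the Introduction, and in the present setting amounts to jointly controlling the entire factor $g(s)$ and the tail $\prod_{j>N}(s-p_j)^{d_j}$ on circles of growing radius. Given convergence of $\sum_{j} d_j\,\mathfrak{R}(p_j)$, the branch-cut sum converges to $2i\pi\sum_{j=1}^{\infty}d_j\,\mathfrak{R}(p_j)$, and rearranging the finite-$N$ identity and passing to the limit delivers \eqref{eq:no_limit}.

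The main obstacle is the large-semicircle estimate. In Lemma \ref{lemma:R} the vanishing rested on an explicit degree count of the rational plant; in the present infinite-product setting one instead needs a growth bound on the Weierstrass factors that does not degrade as $N$ grows, which is effectively a genus condition on the factorization together with a polynomial-type bound on $g(s)$. A secondary but routine issue is arranging the branch cuts for the first $N$ poles to remain mutually disjoint and to avoid all other $p_j$; this is immediate from the no-limit-point hypothesis once each cut is taken radially from $p_j$ outward to $\gamma_{R_N}$. With these points in hand, the argument is a genuine repetition of Theorem \ref{thm:finite} carried through a sequential limit in $N$, and the rearrangement of the limiting identity produces exactly \eqref{eq:no_limit}.
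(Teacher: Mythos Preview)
Your proposal is correct and follows essentially the same approach as the paper: exploit the no-limit-point hypothesis to ensure only finitely many poles lie inside any bounded semicircle, run the contour analysis of Theorem~\ref{thm:finite} on a sequence of semicircles of increasing radius (the paper simply takes $R_n=n$ rather than threading $R_N$ between consecutive pole moduli), and pass to the limit using the standing assumption that the large-arc integral vanishes. Your write-up is in fact more explicit than the paper's about the caveats---the large-semicircle growth condition and the convergence of $\sum_j d_j\,\mathfrak{R}(p_j)$---and your one parenthetical suggestion to route branch cuts radially outward to $\gamma_{R_N}$ differs from the paper's horizontal cuts to the imaginary axis (on which Lemma~\ref{lemma:branch} is computed), but this does not affect the argument.
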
 

\begin{proof}
To prove this result, we have the following two claims.

\begin{claim}\label{smallest_poles}
There exists an open right half plane pole $p_1 \neq 0$ such that $|p_1| = \inf_p \{|p|:$ $p$ is an open right half plane pole$\}$.
\end{claim}
\begin{proof}
Suppose this wasn't the case, then $\forall \epsilon>0, \exists n(\epsilon)$ such that $|p_{n(\epsilon)}|<\epsilon$. Letting $\epsilon$ go to $0$, it is clear we will get a sequence of poles that tends to $0$. This contradicts the fact that there are no limit points in this sequence of poles. 
\end{proof}

\begin{claim}\label{finite_poles}
For all $R > 0$, the set $\{|p| = R:$ $p$ is an open right half plane pole$\}$ is finite. 
\end{claim}

\begin{proof}
The set $|z| = R$ is a compact set, and thus if $\{|p| = R:$ $p$ is an open right half plane pole$\}$ is infinite, then there must be a converging subsequence, again contradicting there are no limits points. 
\end{proof}

\begin{figure}
    \centering
    \includegraphics[width = .3\textwidth]{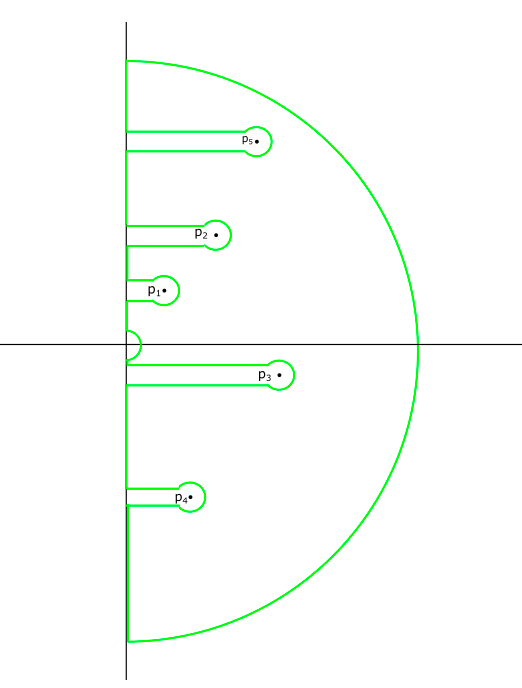}
    \caption{Contour for $\gamma_n$ with $n = 5$.}
    \label{fig:gamma_n}
\end{figure}

By Claim \ref{smallest_poles}, we can order all the open right half plane poles in nondecreasing order of magnitude. Let us now consider the sequence of contours $\{\gamma_n\}_{n=1}^\infty$ with outer radius $R_n = n$ as given in Fig. \ref{fig:gamma_n}. By Claim \ref{finite_poles}, each $\gamma_n$ has finitely many poles inside of the semicircle, which allows us to rigorously apply the method used to prove Theorem \ref{thm:finite} when there are only finitely many poles. For a fixed $R$, if we let the limit as $\epsilon \rightarrow 0$ for each branch cut, then by equation \eqref{eq:1}
\begin{align}\label{eq:limit_general}
  I_{R_n}(S) = 4 \pi\sum_{|p_j|<R} d_j\mathfrak{R}(p_j)-2i\int_{\gamma_R} \log S(s) ds .
\end{align}
We now let $n \to \infty$ (meaning $R_n \to \infty$ as well) to conclude $I(s) = \lim_{n\rightarrow \infty} I_{R_n}(S)$, it follows that 
\begin{equation}
I(s) =  4 \pi\sum_{p_j} d_j\mathfrak{R}(p_j),
\end{equation}
thus proving Theorem \ref{thm:nolimit}.
\end{proof}
\begin{remark}
It's clear to see that if \eqref{eq:no_limit} converges, then the open right half plane poles must approach the imaginary axis. 
\end{remark}
\subsection{Limit points}
Analogous to the last section, in this section we prove the following theorem.
\begin{theorem} \label{thm:limit}
    For a sensitivity function $S(s)$, let $I(s)$ be as defined in equation \eqref{eq:I(s)}. Let $p_j$ be the open right half plane poles of $L(s)$ with order $d_j$ and suppose there are infinitely many of them. Furthermore, suppose there is a single limit points, we have 
    \begin{equation}
        I(s) = 4 \pi \sum_{p_j}d_j\mathfrak{R}(p_j).
    \end{equation}
    where summation across poles $p_j$ are in the right half plane. 
\end{theorem}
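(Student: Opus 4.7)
The plan is to reprise the contour argument of Theorem~\ref{thm:nolimit}, modifying the family $\{\gamma_n\}$ so as to exclude a small neighborhood of the single limit point $p^\star$. Let $p^\star$ denote the unique accumulation point of the pole sequence; in light of the remark following Theorem~\ref{thm:nolimit}, $p^\star$ necessarily lies on the imaginary axis whenever the right-hand side of the theorem is finite. For $\delta > 0$ small and $R_n > |p^\star| + \delta$, I would form the contour $\gamma_{n,\delta}$ from the contour of Fig.~\ref{fig:gamma_n} by indenting the imaginary-axis portion with a small right semicircle $\gamma_{p^\star,\delta}$ of radius $\delta$ centered at $p^\star$, so that the cluster of poles accumulating at $p^\star$ is excluded from the interior of $\gamma_{n,\delta}$.

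The crucial observation that lets the earlier argument go through is that $\gamma_{n,\delta}$ encloses only finitely many poles of $S$. Indeed, the compact set
\[
K_{n,\delta} := \{\, z : |z| \le R_n,\ \mathfrak{R}(z) \ge 0,\ |z - p^\star| \ge \delta\, \}
\]
contains no limit point of $\{p_j\}$ by the uniqueness of $p^\star$, so the same Bolzano--Weierstrass argument as in Claim~\ref{finite_poles} shows that only finitely many $p_j$ lie in $K_{n,\delta}$. Theorem~\ref{thm:finite} together with Lemmas~\ref{lemma:R}--\ref{lemma:branch} then applies verbatim on $\gamma_{n,\delta}$, yielding the analogue of \eqref{eq:limit_general},
\begin{equation*}
I_{n,\delta}(S) = 4 \pi \!\!\!\sum_{\substack{|p_j| < R_n \\ |p_j - p^\star| > \delta}}\!\!\! d_j \mathfrak{R}(p_j) - 2i \int_{\gamma_{R_n}} \log S(s)\, ds - 2i \int_{\gamma_{p^\star,\delta}} \log S(s)\, ds.
\end{equation*}

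I would then take $n \to \infty$ (which kills the large-semicircle integral by Lemma~\ref{lemma:R}) followed by $\delta \to 0$ (which extends the partial sum to $4\pi \sum_{p_j} d_j \mathfrak{R}(p_j)$, convergent by the theorem's hypothesis). The main obstacle is controlling the new boundary piece and proving
\[
\lim_{\delta \to 0} \int_{\gamma_{p^\star,\delta}} \log S(s)\, ds = 0.
\]
Using the Weierstrass factorization \eqref{eq:weierstrass} I would write $\log S(s) = \log g(s) + \sum_j d_j \log(s - p_j)$. Holomorphy of $g$ makes the first term contribute $O(\delta)$. For each factor $d_j \log(s - p_j)$, parameterizing the semicircle of radius $\delta$ produces an elementary estimate of order $\delta\, |d_j|\, (1 + |\log \delta|)$ when $|p_j - p^\star| \le \delta$ and of order $\delta\, |d_j| / |p_j - p^\star|$ otherwise. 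Summing these tail estimates and invoking the convergence of the Weierstrass product near $p^\star$ — the same summability that renders $\sum d_j \mathfrak{R}(p_j)$ finite — bounds the arc integral by $o(1)$. Combining the three limits yields the claimed identity. The delicate point throughout is ensuring that the order in which the limits $n \to \infty$ and $\delta \to 0$ are taken does not disturb the cancellation; I would justify the exchange by dominated convergence against the summability hypothesis implicit in the theorem's statement.
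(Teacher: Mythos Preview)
Your approach matches the paper's: indent the imaginary-axis segment of $\gamma_n$ by a small semicircle around the limit point $p^\star$, verify only finitely many poles remain inside, apply the finite-pole contour analysis, and show the indentation integral vanishes. The one substantive difference is in how you place $p^\star$ on the imaginary axis. You invoke the remark after Theorem~\ref{thm:nolimit} (convergence of $\sum d_j\mathfrak{R}(p_j)$ forces $\mathfrak{R}(p_j)\to 0$), whereas the paper gives a self-contained contour argument: it first lays a branch cut at $p^\star$ itself (rather than at the individual nearby poles), computes the combined $\gamma_{\rightarrow}/\gamma_{\leftarrow}$ contribution explicitly, and shows it is bounded below by $\sum_{|p_j-p^\star|<\epsilon} 2\pi(\mathfrak{R}(p^\star)-\epsilon)$, hence diverges unless $\mathfrak{R}(p^\star)=0$. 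The paper's route does not presuppose finiteness of the right-hand side and works directly at the level of the contour integral; yours is shorter but conditional on the sum converging. For the arc integral at $p^\star$, the paper simply appeals to ``the same reasoning as in Lemma~\ref{lemma:epsilon}'' without your term-by-term Weierstrass estimates, so your treatment of that step is actually more explicit than the paper's. Finally, the paper takes the limits in the opposite order ($\epsilon\to 0$ at fixed $R$, then $R\to\infty$), which sidesteps the interchange-of-limits concern you flag at the end.
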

\begin{figure}
    \centering
    \includegraphics[width = .3\textwidth]{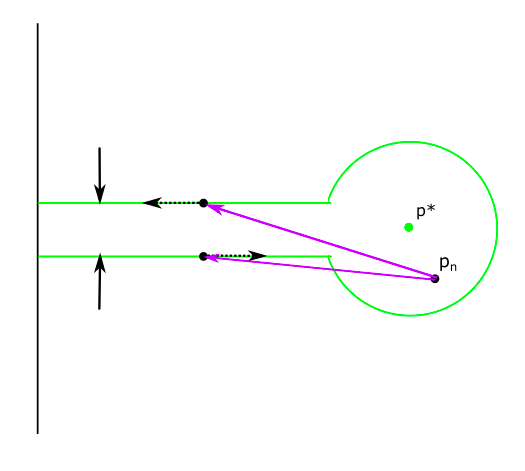}
    \caption{Branch cut at $p^*$ with circle of radius $\epsilon$, with $p_j$ a sequence of poles tending towards $p^*$.}
    \label{fig:branch_limit}
\end{figure}
\begin{proof}
    For simplicity, suppose the right half plane poles have a limit point $p^*$. Let us again consider the sequence of contours $\gamma_n$ with outer radius $R_n$, however, for each $\epsilon >0$ for the poles $p$ such that $|p - p^*| < \epsilon$ we don't put a branch cut for these poles. Instead, we put a branch cut centered at $p^*$ as in Fig. \ref{fig:branch_limit}. By the Weierstrass factorization theorem we have
\begin{align}\label{eq:branch_limit}
     \log S(s) = \log(f(s))+\sum_{|p_j-p^*|<\epsilon}d_j\log(s - p_j) +\log(g(s)),
\end{align}
where $f(s)$ only has poles at greater than $\epsilon$ away from $p^*$. Thus, $f(s)$ has the same phase on the entire branch cut so that $\int_{\gamma_n, \rightarrow }\log(f(s)) ds +  \int_{\gamma_n, \leftarrow }\log(f(s)) ds = 0$. Our goal is now to evaluate $\int_{\gamma_n, \rightarrow }\log(s - p_j) ds + \int_{\gamma_n, \leftarrow }\log(s-p_j) ds$. $s - p_j$ can be represented by the pink arrows in Fig. \ref{fig:branch_limit}. From this, it is clear that as the width of the corridor goes to $0$, if the phase for one of them is $\theta$ then the other is $2\pi - \theta$. Thus, we can parameterize $\gamma_{p_j, \rightarrow}$ in terms of $x$ as $s(x) = r(x)e^{i\theta(x)}$ and $\gamma_{p_j, \leftarrow}$ as $s(x) = r(x)e^{i(\theta(x)+2\pi)}$ where $x \in [0, \mathfrak{R}(p^*) - \epsilon]$ and $r(x) = |s(x)|$. We first evaluate $ \int_{\gamma_n, \rightarrow }\log(s - p_j) ds$ as follows. 
\begin{align}
    \int_{\gamma_n, \rightarrow }\log(s - p_j) ds &=\int_0^{\mathfrak{R}(p_j) - \epsilon}\log\left(r(x)e^{i\theta(x)}\right)\frac{d}{dx}\left[r(x)e^{i\theta(x)}\right]dx\\
    &= \int_0^{\mathfrak{R}(p_j) - \epsilon}\log r(x)\frac{d}{dx}\left[r(x)e^{i\theta(x)}\right]dx+ \int_0^{\mathfrak{R}(p_j) - \epsilon}i\theta(x)\frac{d}{dx}\left[r(x)e^{i\theta(x)}\right]dx. 
\end{align}
We now evaluate 
\begin{align}
     \int_{\gamma_n, \leftarrow }\log(s-p_j) ds &= \int_{\mathfrak{R}(p_j) - \epsilon}^0\log\left(r(x)e^{i(\theta(x)+2\pi)}\right)\frac{d}{dx}\left[r(x)e^{i(\theta(x)+2\pi)}\right] dx\\
     &=\int_{\mathfrak{R}(p_j) - \epsilon}^0 \log r(x)\frac{d}{dx}\left[r(x)e^{i\theta(x)}\right]dx + \int_{\mathfrak{R}(p_j) - \epsilon}^0 i(\theta(x)+2\pi)\frac{d}{dx}\left[r(x)e^{i\theta(x)}\right]dx 
\end{align}
Thus, we have the following result
\begin{align}
       \int_{\gamma_n, \leftrightarrows}\log(s-p_j) ds &= -\int_0^{\mathfrak{R}(p_j) - \epsilon}2\pi i\frac{d}{dx}\left[r(x)e^{i\theta(x)}\right]dx\\
       &= -2\pi i \left(r(\mathfrak{R}(p_j) - \epsilon)e^{i\theta(\mathfrak{R}(p_j) - \epsilon)} - r(0)e^{i\theta(0)} \right).
\end{align}
Thus, using equation \eqref{eq:branch_limit}, for each $\epsilon > 0$ sufficiently small, we obtain
\begin{align}
     \left| \int_{\gamma_n, \leftrightarrows }\log S(s)ds\right| &= \left|\sum_{|p_j-p^*|<\epsilon}  \int_{\gamma_n, \leftrightarrows }\log(s - p_j)ds \right|\\
     &= \left|\sum_{|p_j-p^*|<\epsilon} 2\pi i \left(r(\mathfrak{R}(p_j) - \epsilon)e^{i\theta(\mathfrak{R}(p_j) - \epsilon)} - r(0)e^{i\theta(0)} \right)\right| \\
     &> \sum_{|p_j-p^*|<\epsilon} 2\pi (\mathfrak{R}(p^*)-\epsilon),
\end{align}
where the last inequality comes from the triangle inequality. Thus the sum above is infinite if $\mathfrak{R}(p^*)>0$ since $|\{p_j:|p_j - p^*|<\epsilon$, $p_j$ an open right half plane pole$\}| = \infty$. It follows that any converging subsequence of the open right half plane poles must be converging to a point on the imaginary axis. 
\begin{figure}
    \centering
    \includegraphics[width = .3\textwidth]{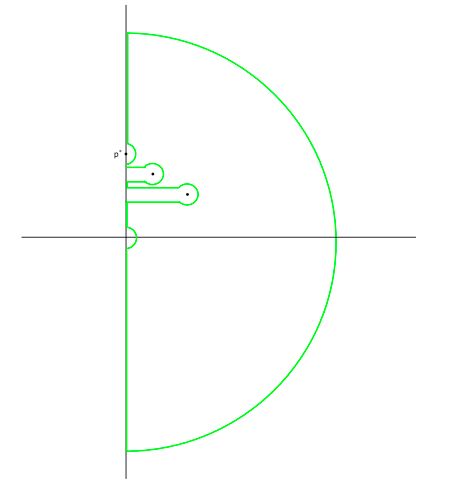}
    \caption{Contour plot for a converging subsequence of poles of $S(s)$ to a point on the imaginary axis. }
    \label{fig:gamma_n,ep}
\end{figure}
Thus, for each limit point $p^*$ we consider $\gamma_R$ as before, but now with a semicircle of radius $\epsilon$ around the limit point $p^*$. See Fig. \ref{fig:gamma_n,ep}. Using the same reasoning as in Lemma \ref{lemma:epsilon}, we conclude that for each value of $R$, $\lim_{\epsilon \rightarrow 0}\int_{\gamma_{p^*, \epsilon}}\log(S(s))ds = 0$ (i.e. the integral of $|S(s)|$ around the circle of radius epsilon centered at $p^*$ goes to $0$ as $\epsilon\rightarrow 0$). Thus, it is  clear that as $\epsilon \rightarrow 0$, the integral of $S(s)$ around this contour is analogous to equation \eqref{eq:final}, that is, it takes on the form:
\begin{equation}\label{eq:no_limit_general}
    I_{R}(s)=4 \pi \sum_{|p_j|<R}d_j\mathfrak{R}(p_j)- \int_{\gamma_R} \log S(s) ds.
\end{equation}

Thus, as $R \rightarrow \infty$, we finish the proof of Theorem \ref{thm:limit} as our desired integral.
\end{proof}
For the sake of simplicity, in equations \eqref{eq:limit_general} and \eqref{eq:no_limit_general} we suppose that the integral going around $\gamma_R$ goes to $0$ as $R \to \infty$. However, under different degree constraints on the sensitivity, if this limit were to give a different value, the Bode integral can be constrained to a smaller value, allowing us to generalize our techniques to a wider class of sensitivity functions. This is formalized by the following corollary.
\begin{corollary}\label{corollary}
For a sensitivity $S$ characterized as in equation \eqref{eq:weierstrass} the Bode integral is 
\begin{equation}
    I(s)=4 \pi \sum_{p_j}d_j\mathfrak{R}(p_j)-
    \lim_{R\to \infty}\int_{\gamma_R} \log S(s) ds.
\end{equation}

where $p_j$ are the right half plane poles. 
\end{corollary}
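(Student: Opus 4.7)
The plan is to revisit the contour argument used in the proofs of Theorem \ref{thm:nolimit} and Theorem \ref{thm:limit}, but this time retain the $\gamma_R$ contribution explicitly instead of assuming it vanishes. Starting from the Weierstrass factorization \eqref{eq:weierstrass}, I can write $\log S(s) = \log g(s) + \sum_{j} d_j \log(s-p_j)$. Since $g(s)$ is entire with no right-half-plane zeros, the only obstructions to analyticity of $\log S$ inside a right-half-plane contour are the branch cuts attached to the poles $p_j$.

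First I would apply the Cauchy integral theorem to $\log S(s)$ on the contour $\gamma_n$ of Fig. \ref{fig:gamma_n} when the poles have no limit point, or on the contour of Fig. \ref{fig:gamma_n,ep} when they accumulate at a point of the imaginary axis. In both cases Lemma \ref{lemma:epsilon} and Lemma \ref{lemma:0} let me discard the small-circle integrals around each $p_j$ and around the origin as $\epsilon \to 0$, while Lemma \ref{lemma:branch} evaluates each pair of branch-cut segments to $2\pi i\, d_j\, \mathfrak{R}(p_j)$. This yields, for every finite $R$, exactly the identity
\begin{equation*}
I_R(s) = 4\pi \sum_{|p_j|<R} d_j \mathfrak{R}(p_j) - \int_{\gamma_R}\log S(s)\,ds,
\end{equation*}
which is \eqref{eq:no_limit_general}. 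The corollary then follows by letting $R \to \infty$: the left-hand side tends to $I(s)$ by definition, the sum tends to $\sum_{p_j} d_j \mathfrak{R}(p_j)$ once the poles have been ordered nondecreasingly in magnitude (invoking Claims \ref{smallest_poles} and \ref{finite_poles} in the no-limit-point regime, and the limit-point argument of Theorem \ref{thm:limit} otherwise), and the remaining integral tends to $\lim_{R\to\infty}\int_{\gamma_R}\log S(s)\,ds$ by hypothesis.

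The main obstacle is a matter of hypotheses rather than technique: one must ensure that the sum $\sum_{p_j} d_j \mathfrak{R}(p_j)$ and the limit $\lim_{R\to\infty}\int_{\gamma_R}\log S(s)\,ds$ both exist. In Theorems \ref{thm:nolimit} and \ref{thm:limit} the latter limit was taken to be zero, which implicitly demanded growth control on $\log S$ at infinity, analogous to the degree inequality $m > \alpha + n + 1$ of Theorem \ref{thm:finite} in the rational case. To make the corollary a genuinely more general statement one either interprets it conditionally (asserting the identity whenever both limits exist), or else imposes explicit growth conditions on the Weierstrass factor $g(s)$ and summability conditions on $\{d_j \mathfrak{R}(p_j)\}$ that guarantee convergence of both pieces. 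In either reading, the proof itself is a direct repackaging of the earlier contour arguments; the substantive content of the corollary is precisely the relaxation of the "vanishing at infinity" assumption to the weaker "existence of the limit" one, leaving room for the future tradeoff analysis the authors mention in the conclusion.
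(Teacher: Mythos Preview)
Your proposal is correct and matches the paper's own approach. The paper does not even supply a separate proof of the corollary: it simply observes that the finite-$R$ identities \eqref{eq:limit_general} and \eqref{eq:no_limit_general} already established in the proofs of Theorems~\ref{thm:nolimit} and~\ref{thm:limit} yield the result once one retains the $\gamma_R$ contribution instead of assuming it vanishes, which is exactly what you do.
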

\section{Conclusion and Future Work}\label{sec:conclusion}
In this paper, we investigated the Bode integral, evaluated what happens for rational systems with fractal PID controllers, and found that the integral is $4\pi$ times the sum of the real parts of the open right half plane poles weighted by the order of the root. We extended our analysis to irrational systems with infinite open right half plane poles and found that when those poles have a limit point it must tend to the imaginary axis. In both cases, the result is the same but replaced with an infinite sum. Thus, the Bode condition can still be used as long as the sum of the real parts of the open right half plane poles converge.

For future work, it would be interesting to investigate the specific conditions on the sensitivity $S$ for the integral in the Bode condition given in Corollary \ref{corollary} to converge. This will allow us to constrain the Bode condition more. In our work, we also suppose the plant denominator is characterized by the Weierstrass factorization theorem, but the case where the plant is characterized by fractional derivatives remains to be investigated. 

\bibliographystyle{plain}
\bibliography{references}
\newpage
\appendix

\section{Proof of Lemmas}

\begin{proof}[Proof of Theorem \ref{lemma:R}]
We have:
\begin{align*}
    \left\vert\int_{\gamma_R} \log S(s)ds\right\vert \leq \int_{\gamma_R} \left\vert\log\left( 1+ \frac{N(s)(k_1 s^{\alpha+\beta} + k_{-1} + k_0s^\beta)}{D(s)s^\beta} \right)\right\vert ds.
\end{align*}
Now letting $N(s) = \sum_{l=0}^n a_ls^l$ and $D(s) = \sum_{l=1}^m b_ls^l$ with $a_l, b_l \in \mathbb{C}$, we get 

\begin{align}
    \left\vert\int_{\gamma_R} \log S(s)ds\right\vert &\leq \int_{\gamma_R} \left\vert\log\left( 1+ \frac{s^{\alpha+n-m}(a_n + \sum_{l=1}^n a_{n-l}s^{-l})(k_1 + k_{-1}s^{-\alpha-\beta} + k_0 s^{-\alpha}}{b_m + \sum_{l=1}^m b_{m-l}s^{-l}} \right)\right\vert ds \\
    &\sim \int_{\gamma_R} \left\vert\log\left( 1+ \frac{a_nk_1}{b_m}s^{\alpha+n-m} \right)\right\vert ds \\
    &= \int_{\gamma_R} \left\vert\sum_{l=1}^\infty \frac{(-1)^{l-1}}{l} \left(\frac{a_nk_1}{b_m} s^{\alpha+n-m}\right)^l\right\vert ds\\
    &\leq \int_{\gamma_R} \sum_{l=1}^\infty  \left(\frac{|a_nk_1|}{|b_m|} |s|^{\alpha+n-m}\right)^l ds \\
    &= \pi R\sum_{l=1}^\infty  \left(\frac{|a_nk_1|}{|b_m|} R^{\alpha+n-m}\right)^l\\
    &=   \frac{\pi \frac{|a_nk_1|}{|b_m|} R^{\alpha+n-m+1}}{1 - \frac{|a_nk_1|}{|b_m|} R^{\alpha+n-m}},
\end{align}

where in the last equality, we used $m > \alpha + n + 1$ to apply the geometric series formula. Thus, the limit $ \left\vert\int_{\gamma_R} \log S(s)ds\right\vert  = 0$.  
\end{proof}
\begin{proof}[Proof of Lemma \ref{lemma:epsilon}]

We have:
\begin{align}\label{eq:eq3} 
    \left\vert\int_{\gamma_{p_j,\epsilon}} \log S(s) ds \right\vert &\leq 2\int_{\gamma_{p_j,\epsilon}} \left\vert\log\left( 1+ \frac{N(s)(k_1 s^{\alpha+\beta} + k_{-1} + k_0s^\beta)}{D(s)s^\beta} \right)\right\vert ds\\
&\leq  M'\int_{\gamma_{p_j,\epsilon}}\log\left( \left\vert 1+ \frac{N(s)(k_1 s^{\alpha+\beta} + k_{-1} + k_0s^\beta)}{D(s)s^\beta}\right\vert \right)ds\\
&\leq M'\int_{\gamma_{p_j,\epsilon}}\log\left(1+ \frac{|N(s)|
    |k_1 s^{\alpha+\beta} + k_{-1}+ k_0s^\beta|}{|D(s)||s|^\beta}\right)ds\\
    &\leq  M\pi \epsilon\log\left(1+\frac{1}{|p_j|^\eta}\right) \to 0
\end{align}

as $\epsilon \to 0$ for some constants $M', M$ and $\eta$. The second inequality follows since for any complex number $z$, we have $|\log(z)| = \sqrt{\log|z|^2 + \theta^2}$ where $\theta = \arg(z) \in [0, 2\pi]$. Thus, when $z$ is large, we have $|\log(z)| \leq C \log|z|$ for some constant $C$. 
\end{proof}

\begin{proof}[Proof of Lemma \ref{lemma:0}]

We have:
\begin{align}
    \left\vert\int_{\gamma_{0,\epsilon}} \log S(s) ds \right\vert &\leq 2\int_{\gamma_{0,\epsilon}} \left\vert\log\left( 1+ \frac{N(s)(k_1 s^{\alpha+\beta} + k_{-1} + k_0s^\beta)}{D(s)s^\beta} \right)\right\vert ds\\
    &\leq  M'\int_{\gamma_{0,\epsilon}}\log\left( \left\vert 1+ \frac{N(s)(k_1 s^{\alpha+\beta} + k_{-1} + k_0s^\beta)}{D(s)s^\beta}\right\vert \right)ds \\
    &\leq M'\int_{\gamma_{0,\epsilon}}\log\left(1+ \frac{|N(s)|
    |k_1 s^{\alpha+\beta} + k_{-1}+ k_0s^\beta|}{|D(s)||s|^\beta}\right)ds\\
    &\leq  M\pi \epsilon\log\left(1+ \frac{1}{\epsilon^\eta}\right)
    \label{eq:3}
\end{align}
for some constant $M$ and $\eta$. The second inequality follows since for any complex number $z$, we have $|\log(z)| = \sqrt{\log|z|^2 + \theta^2}$ where $\theta = \arg(z) \in [0, 2\pi]$. We can evaluate the right hand side of equation \eqref{eq:3} as $\epsilon \rightarrow 0$ using L’H\^{o}pital’s rule:
\begin{align}
    \lim_{\epsilon\rightarrow 0}M\pi \epsilon\log\left(1+ \frac{1}{\epsilon^\eta}\right)=  \lim_{\epsilon\rightarrow 0}M\pi \frac{\log\left(1+ \frac{1}{\epsilon^\eta}\right)}{\frac{1}{\epsilon}}
    = \lim_{\epsilon\rightarrow 0}M\pi \frac{\frac{1}{1+ \frac{1}{\epsilon^\eta}}\frac{-\eta}{\epsilon^{\eta+1}}}{\frac{-1}{\epsilon^2}}
    = \lim_{\epsilon\rightarrow 0}M\pi \frac{\eta}{\epsilon^{\eta - 1} + \frac{1}{\epsilon}} = 0.
\end{align}
\end{proof}

\begin{proof}[Proof of \ref{lemma:branch}]
To evaluate these integrals, we use the fundamental theorem of algebra to write the signal as follows,
\begin{equation}
    S(s) = g(s)\prod_j (s - p_j)^{d_j},
\end{equation}
where $d_j$ is the order of the open right half plane poles of $L(s)$ at $s = p_j$, and where $g(s)$ is holomorphic on $\mathbb{C}$. We have the following relationship for each poles $p_j$ in $D(s)$:
\begin{equation}
    \log S(s) = \log(g(s)) +  d_j\log(s - p_j)  ,
\end{equation}
Thus, $  \int_{\gamma_n, \rightarrow }\log(g(s)) ds +  \int_{\gamma_n, \leftarrow }\log(g(s)) ds = 0$ so we have 
\begin{align}
    \int_{\gamma_n, \rightarrow }\log(S(s)) ds +  \int_{\gamma_n, \leftarrow }\log(S(s)) ds = d_j\int_{\gamma_n, \rightarrow }\log(s - p_j) ds + d_j\int_{\gamma_n, \leftarrow }\log(s-p_j) ds.
\end{align}

\begin{figure}
    \centering
    \includegraphics[width = .3\textwidth]{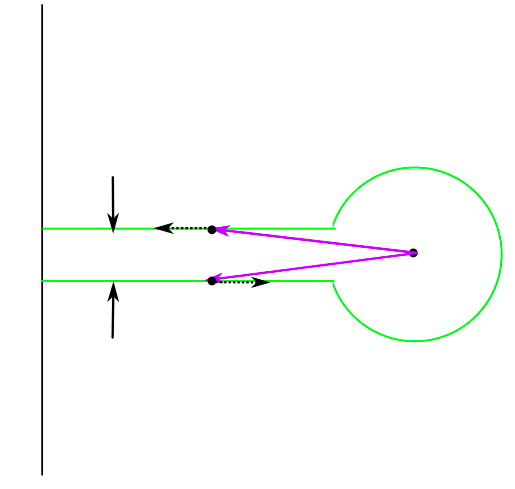}
    \caption{Branch Cut for $\gamma$ at the poles $p_j$.}
    \label{fig:branch}
\end{figure}

We can see from Fig. \ref{fig:branch} that the phase of $s - p_j$ (as given by the pink arrows) goes to $-i\pi$ when $s$ is on $\gamma_{p_j, \rightarrow}$ and $i\pi$ when $s$ is on $\gamma_{p_j, \leftarrow}$ as the width of the corridor goes to $0$. Thus, we can parameterize $\gamma_{p_j, \rightarrow}$ as $s = p_j + xe^{-i\pi}$ and $\gamma_{p_j, \leftarrow}$ as $s = p_j + xe^{i\pi}$. We first evaluate $ \int_{\gamma_n, \rightarrow }\log(s - p_j) ds$ as follows. 
\begin{align}
    \int_{\gamma_n, \rightarrow }\log(s - p_j) ds &=\int_0^{\mathfrak{R}(p_j)}\log\left(xe^{-i\pi}\right)e^{-i\pi}dx\\
    &= -\int_0^{\mathfrak{R}(p_j)}\log(x)dx + \int_0^{\mathfrak{R}(p_j)}i\pi dx\\
    &= \left[ -x\log(x) + x \right]_0^{\mathfrak{R}(p_j)} +i\pi \mathfrak{R}(p_j).
\end{align}

We can now evaluate 
\begin{align}
     \int_{\gamma_n, \leftarrow }\log(s-p_j) ds &= \int_{\mathfrak{R}(p_j)}^0\log\left(xe^{i\pi}\right)e^{i\pi}dx\\
     &= -\int_{\mathfrak{R}(p_j)}^0\log(x)dx -  \int_{\mathfrak{R}(p_j)}^0 i\pi dx\\
&= \left[ -x\log(x) + x \right]_{\mathfrak{R}(p_j)}^0 + i\pi \mathfrak{R}(p_j),
\end{align}
We can add the integral around $\gamma_n, \rightarrow$ and $\gamma_n, \leftarrow$ together to get the desired result. 
\end{proof}

\end{document}